\date{}
\newtheorem{definition}{Definition}
\newtheorem{proposition}{Proposition}
\newcommand{\retrieve}{\textsc{RETRIEVE}}
\newcommand{\RomanNumeralCaps}[1]
    {\MakeUppercase{\romannumeral #1}}
\def \papertitle{A Sketch Based Game Theoretic Approach to Detect Anomalous Dense Sub-Communities in Large Data Streams}
\newtheorem{rem}{Remark}
\DeclareMathAlphabet{\pazocal}{OMS}{zplm}{m}{n}
\newcommand{\mrmd}{\textsc{TGDC}}
\def\BibTeX{{\rm B\kern-.05em{\sc i\kern-.025em b}\kern-.08em
    T\kern-.1667em\lower.7ex\hbox{E}\kern-.125emX}}
\title{\papertitle}%\\in the Plackett-Luce model} 
\author{
Prateek Chanda%
\thanks{Microsoft Research, Bangalore, India; {\tt v-pratec@microsoft.com}.} 
\and
Aadirupa Saha%
\thanks{Microsoft Research, New York City, USA; {\tt aadirupa.saha@microsoft.com}.} 
}
\begin{document}

\maketitle

\begin{abstract}
Detecting anomalous subgraphs in a dynamic graph in an online or streaming fashion is an important requirement in industrial settings for intrusion detection or denial of service attacks. While only detecting anomalousness in the system by edge frequencies is an optimal approach, many latent information can get unnoticed in the process, since as a characteristic of the network only edge frequencies are considered. We propose a game theoretic approach whereby using the modularity function we try to estimate in a streaming graph \emph{whether addition of a new edge in the current time tick results in a dense subgraph creation, thus indicating possible anomalous score}. Our contributions are as follows: (a) We propose a novel game-theoretic framework for detecting dense subcommunities in an online streaming environment; (b) We detect such subcommunities using constant memory storage. Our results are corroborated with empirical evaluation on real datasets. 
%\red{1. Intro (Prob Statement + Contribution + Rel Work), 2. Problem Setting (Defs + Motivation of modularity), 3. Proposed method [2 subsections: GT subgraph detections and Myopic 1-hop detection ], 4. Theoretical Guarantees - Estimated anomaly score bounded 5. Expts 6. Contribution  }
\end{abstract}

\vspace{-10pt}
\section{Introduction}
\vspace{-10pt}
Studying the behaviour of agents and detecting anomalies in streaming graphs have gained much importance in recent years due to rising denial of distributed attacks, financial frauds and intrusion in network services. Here we try to solve the problem for detecting dense subgraphs or communities formed in a dynamic network where the network keeps on evolving over time via addition of new nodes or edges.

As graph structures have become more dynamic in nature, particularly in streaming environments where new edges arrive at each time stamp in an online setting, it is crucial to have a constant memory solution for quick detection of intrusion and faster recovery. Recent works like \cite{bhatia2020midas}, \cite{eswaran2018sedanspot} and \cite{eswaran2018spotlight} propose real-time online anomaly detection in large scale networks. Most of these recent anomaly detection models consider mainly edge frequencies (or edge weights as a deciding factor) for anomaly in the network, which often fail to capture the notion of suspicious subgraphs in the system whose corresponding network edges might have less frequency. %\red{cite refs to justify your claims}
%\red{Prob statement (informal?)}

%\red{Main questions we raised and why its important (w/ applications)}
\vspace*{-10pt}
\begin{figure}[H]
    \centering
    \includegraphics[width=0.7\linewidth]{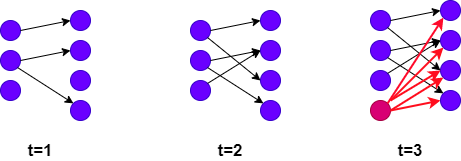}
    \vspace*{-4pt}
    \caption{Sudden appearance of dense subgraphs}
    \label{fig:densesubgraph motivation}
\end{figure}%
\vspace*{-15pt}
Here, we propose a dense subgraph based anomaly detection in online network streams that considers modularity of sub-communities in the network. In addition, we propose a myopic 1-hop based game-theoretic approach that minimizes penalties when forming new communities. 
Our precise contributions are as follows:
\begin{itemize}[noitemsep,topsep=0pt]
    \item \textbf{Modularity Based Dense Subgraph Detection}: In Sections \ref{Proposed Methodology} \& \ref{anomaly indication} we present our proposed Game-Theoretic Framework with long term Reward Maximization for finding dense subgraphs in real time. %\red{put ref. to the section numbers} \blue{resolved}
    \item \textbf{Constant Space}: Using constant space for our model to approximate dense subgraphs within the network.
    
    \item \textbf{Theoretical Guarantees}: We showcase that the using our proposed model in subsequent theorems (see Thm. \ref{proofs}), Nash equilibrium is preserved.
    
    \item \textbf{Real World Effectiveness}: Our experimental results in Section \ref{experiments} show how our proposed model is able to detect dense anomalous subcommunities. %\red{put ref. to the section numbers} \blue{resolved}
\end{itemize}

\subsection{Related Workss}
%\red{Rel. works. }

Recent methods focusing on anomaly detection models in a dynamic network setting can be broadly categorized into the following:

\begin{enumerate}
    \item \textbf{Distribution Learning Based Approach}
    Distribution shifts or sudden changes/spikes %\red{cite} 
    in edge frequency distribution is a common measure for anomaly detection. MIDAS \cite{bhatia2020midas} and its extension MIDAS-F \cite{bhatia2020real} uses chi-square test statistic \cite{zibran2007chi} to compute whether an incoming edge is an anomaly based upon the current edge frequency and the mean frequency (calculated via previous frequency distribution for the same edge). Other existing methods like \cite{eswaran2018sedanspot} and \cite{eswaran2018spotlight}
    \cite{}

    \item \textbf{Dense Subgraph as an Indication of Anomaly}
    \cite{hooi2020telltail} uses Extreme Value Theory and probabilistic measures to evaluate the sudden appearance of dense subgraphs in timeseries networks, but however fails to capture for large data storage where constant memory analysis is required. Other recent methods that try to find dense subgraphs consider edge surplus as a measure of how closely nodes within a subgraph are interconnected \cite{tsourakakis2013denser}
or through k-clique discovery  %\red{cite??}.
%\red{State what these methods have done and what are the limitations.}

\item \textbf{Reinfrocement Learning based dense subgraphs discoveries.}
Recent methods, e.g. \cite{pang2021toward},  used a semi-supervised approach to detect anomalous data via deep reinforcement learning, while \cite{wu2021rlad} uses the same setting coupled with active learning to detect anomalous data points in timeseries. %\red{State what these methods have done and what are the limitations.}

\end{enumerate}

%\red{Optional: Roadmap}
\vspace{-8pt}
\section{Preliminaries and Problem Setting}
\vspace{-7pt}

We start by introducing some notations as listed below.

\begin{table}[h!]
\begin{tabular}{ll}\toprule
\parbox[t]{5cm}{\textbf{Symbols}} &  \parbox[t]{5cm}{\textbf{Definition}} \\ \toprule
A   & Real symmetric $n $ x $n$ adjacency matrix representing the network. \\
\midrule
$\delta$ & Kronecker delta function. 

$\delta_{ij}$ = 
$\begin{cases}
      1, i = j \\
      0, i \neq j
    \end{cases}$
\\
\midrule
Q & Modularity Function \\
\midrule
$\gamma$ & Resolution Parameter. \\
\midrule
$q(u,v,C)$ & The payoff function of node $u$ when node $v$ joins its community $C$\\
\midrule
$\phi(u)$ & Community label assigned to node $u$\\
\midrule
$m$ &  Total edges in the network \\
\midrule
$k_i$ & Degree of node $i$\\
\midrule
$\mu(i,C)$ & Utility of node $i$ to join community $C$ \\
\midrule
$C_x$ & Community labelled as $x$ \\
\midrule
$\mu(C_x,C_y)$ & Utility of $C_x$ when it wants to merge with $C_y$\\

\bottomrule 
\end{tabular}
\end{table}

\vspace{4pt}
\textbf{Problem Setting. } Given a dynamic time-evolving graph $G$ with the set of incoming edges denoted by $\varepsilon = (e_1, e_2, \dots)$ and where $e_i:=(u_i, v_i, t_i)$ denotes the incoming edge between nodes $u_i$ and $v_i$ at time instant $t_i$, how can we detect anomalous dense subgraphs. For instance, in  network logs $u_i$ and $v_i$ may indicate source and destination IP addresses respectively and their corresponding edge denoting a connection request occurring at timestamp $t_i$. We also consider the case where there might be repetitions of the same incoming edge in future i.e. the corresponding network $G$ of incoming edges can be modelled as a directed multi-graph where $(u_i,v_i) \neq (v_i, u_i)$.
We try to answer the following questions through our proposed framework :

\begin{itemize}[noitemsep,topsep=0pt]
    \item Uptil time step $t$, can we observe such closely connected subgraphs and make a decision when a new edge arrives, due to it does the subgraph modularity increase?
    \item Can we detect dense subgraphs in fixed space?
    \item Can we achieve a considerable good accuracy with some error bound in the estimation if we use probabilistic sketch datastructures?
\end{itemize}

\vspace{-8pt}
\subsection{Preliminaries}
\vspace{-6pt}
In this section, some preliminary concepts necessary for our proposed approach are discussed.

\textbf{Concept of Modularity. }
Considering, a network is divided into separate partitions, modularity proposed by \cite{newman2006modularity} represents the quality of a particular partition in the network. For an undirected graph $G$, consisting of $n$ nodes represented by a  $n $ x  $n$ real symmetric  adjacency matrix $A$, we consider having a partition of the network into $\phi$  groups. The number of edges that fall in each group is equal to $\sum_{ij}A_{ij}\delta_{\phi_i}\delta_{\phi_j}$ where $\delta_{ij}$ is the Kronecker delta function. The concept of modularity therefore is proposed as the difference between the current number and the expected number of edges placed randomly within the network. If $P_{ij}$ be the probability that nodes $i$ and $j$ are connected, then the overall modularity value of $G$ can be represented as follows:
\vspace{-6pt}
\begin{equation}
   Q = \frac{1}{2m}\sum_{i,j}(A_{ij} - \frac{k_{i}.k_{j}}{2m})\delta_{C_{i},C_{j}} 
    \label{eq:labelBasicModularity}
\end{equation}
\vspace{-10pt}

\begin{rem}[Motivation behind using Modularity]
Our main proposal is to detect based upon a dynamic setting where new edges appear, whether a new edge joins a community that has already high modularity. If that's the case, we can say it is an anomalous edge since, from the modularity function value we can see a tight modularity community will have high number of inbound edges and very low number of outbound edges. Since in such large networks, anomaly edges usually occur in sub-clusters.

\end{rem}

\noindent \textbf{Game Theoretic Framework}\label{Game theoretic framework}

 Considering a setting with $n$ players denoted as N  = ${1,2,3, \dots n}$, for each $i^{th}$ player in the game, there is a set of strategies that player takes in order to maximize its own payoff in the game. Let's say the set of such strategies be S where S = $S_1 \times S_2 \times \dots \times S_n$, combination of all the strategies of the individual players.

In a network setting, we can consider each such node as an agent that tries to maximize its own payoff, while joining a particular community formed within the network.

%\newpage

\textbf{Corresponding strategies for node $u$ include:}\label{decision}

\begin{itemize}[noitemsep,topsep=0pt,leftmargin=*]
    \item Join a particular community.    $C$ $\implies$ $C$ $\cup$ $u$
    \item  Leave a particular community $C \implies C - u$
    \item Switch communities $C_1$, $C_2$ :  $C_1 - u$ $\implies$  $C_2$ $\cup$ $u$
\end{itemize}

\textbf{Goal of such strategies. } The goal for a node is thus to choose the maximum payoff strategy, thereby maximizing the modularity of the network, leading to a stable partition, in subsequent iterations.

\begin{definition}
\textbf{Payoff function.} As per Marginal Payoff under Section 3 Definition 1. \cite{mcsweeney2012game}  The payoff function of a node $u$ for node $v$ is given as:

\end{definition}

\[q(u,v,C) = \mu(u, C \cup {v}) - \mu(u,C - {v})\] 

where $\mu(i,C)$ is  
$\sum_{j \in C}(A_{ij} -\frac{k_{i}.k_{j}}{2m})\delta_{C,C_{i}}$

The payoff function indicates how useful a particular action will be for a given agent in a game setting. In the community formation domain, payoff indicates whether joining or leaving or switching a community is profitable.
From here we use the notation for denoting the payff function as utility function interchangeably.

%\red{Remove all subsubsections make them paragraph}

\noindent \textsc{\large \bf Properties of Utility Function}

Here we consider some of the desirable properties of payoff functions in general. We follow  definitions \textbf{2} \& \textbf{3} of Section $\RomanNumeralCaps{3}$ from \cite{mcsweeney2012game} regarding the following properties.

\begin{definition}
 \textit{\textbf{Symmetric}} Let $q$ be a payoff function and $u$ and $v$ be two different nodes. Then, $q$ is symmetric iff

\[\forall_{S \in N} (q(u, v, C) = q(v, u, C))\]
\end{definition}

\begin{definition}
\textit{\textbf{Aditively-Separable}}
$q$ is additively-separable iff
$\mu(u, S)$ = $\sum_{v \in S}\delta(u, v, S)$,
The property of additively-separable states that a node u’s payoff for a community S is the sum of the marginal payoffs
over the members of S. 
\end{definition}

We now introduce here the concept of \textit{Nash equilibrium} in the context of reaching a solution to finding optimal communities using the proposed payoff function. The motivation for proposing Nash equilibrium is due to the fact that at the end of each iteration of our community detection model we want to find an optimal community partition (\textit{each node is in the best community relative to other possible alternative communities}). Since each of these nodes can take any of the strategies independently as self-rational agents, the stability of partitions formed is often decided whether the partitions have reached a Nash equilibria. We thus formally define the same.

\begin{definition}
\textbf{Nash Equilibria}
A partition $\phi$ is a Nash Equilibrium for (N, $\mu$) iff for all nodes $u$:

\[ q(u, \phi(u)) \geq q(u, C \cup {u}) \; \;  \;  \forall \;  C \in \phi, C \neq \phi(u)\]

which indicates for any partition, corresponding nodes' payoffs wont improve any further.
\end{definition}

\begin{proposition}
Every network will have at least one Nash equilibrium provided a symmetric and additively-separable payoff function is used.
\end{proposition}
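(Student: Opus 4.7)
The plan is to exhibit the global modularity $Q$ as a (generalized ordinal) potential function for the community-formation game, and then invoke finiteness of the strategy space to conclude that best-response dynamics must terminate at a Nash equilibrium. This is the standard route for existence in potential games, and it is exactly the symmetry plus additive-separability hypotheses that make the argument go through.

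First I would set up the potential-game structure. Fix a partition $\phi$ and consider a single unilateral deviation in which node $u$ leaves its current community $\phi(u)$ and joins some other community $C$. I would compute the change $\Delta Q$ in the global modularity in \eqref{eq:labelBasicModularity} induced by this move, and compare it against the change $\Delta \mu(u,\cdot)$ in $u$'s own utility. Using additive-separability, $\mu(u, C\cup\{u\})$ decomposes into the sum over $v\in C$ of the pairwise marginal contributions $q(u,v,C)$; each such term is, up to the factor $1/(2m)$, precisely the summand $(A_{uv}-k_uk_v/(2m))$ that appears in $Q$ when $u$ and $v$ share a community. Symmetry of $q$ then guarantees that the contribution $v$ receives from $u$'s arrival equals the contribution $u$ receives from $v$, so no pairwise term is double-counted or lost when we pass from the per-node accounting to the global sum.

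The key identity I would aim to establish is therefore
\begin{equation*}
\mu(u,C\cup\{u\}) - \mu(u,\phi(u)\setminus\{u\}) \;=\; 2m\bigl(Q(\phi')-Q(\phi)\bigr),
\end{equation*}
where $\phi'$ is the partition after $u$'s move. This makes $2m\cdot Q$ an exact potential for the game restricted to single-node deviations. In particular, any strictly improving unilateral deviation by $u$ strictly increases $Q$. I would then observe that the space of partitions of the $n$ vertices is finite (of size the Bell number $B_n$), and the potential $Q$ takes only finitely many values on this finite set. Consequently, any sequence of strict best-response deviations must terminate; the terminal partition, by definition of termination, admits no profitable unilateral deviation and is therefore a Nash equilibrium in the sense of the definition above. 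Existence follows.

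The main obstacle will be the bookkeeping in the potential identity: the definition of $\mu$ sums $(A_{ij}-k_ik_j/(2m))\delta_{C,C_i}$ over $j\in C$, so when $u$ moves, one must carefully track (i) terms that disappear because $u$ severs its intra-community pairings in $\phi(u)$, and (ii) terms that appear because $u$ forms new intra-community pairings in $C$. Here symmetry is what lets one equate $u$'s marginal view of these changes with the global view in $Q$, and additive-separability is what lets one write the move as a sum of pairwise contributions rather than an opaque set-function difference. A minor subtlety is that the proposition asks only for existence of \emph{some} Nash equilibrium, not convergence of arbitrary dynamics, so I do not need to bound the number of moves — finiteness of the partition lattice plus strict monotonicity of the potential along improving moves suffices.
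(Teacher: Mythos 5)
The paper offers no proof of its own here---it simply defers to the citation of \cite{bogomolnaia2002stability}---and your potential-function argument is precisely the argument underlying that cited result: symmetry plus additive separability make the aggregate utility (equivalently, a constant multiple of the modularity $Q$) an exact potential for unilateral deviations, and finiteness of the partition lattice forces any strictly improving dynamics to terminate at a Nash-stable partition. Your proposal is correct and matches the intended proof, up to an immaterial constant in the potential identity (under the paper's ordered-pair convention for $Q$ the right-hand side should be $m\bigl(Q(\phi')-Q(\phi)\bigr)$ rather than $2m\bigl(Q(\phi')-Q(\phi)\bigr)$, which does not affect the sign comparison the argument relies on).
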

This proposition is proven in \cite{bogomolnaia2002stability}.

We thus show in further discussions how our proposed payoff function would follow the utility function properties, thus achieving Nash Equilibria.

\vspace{12pt}

\noindent \textsc{\large \bf Concept of Reinforcement Learning}

\vspace{7pt}

A stochastic game which involves multiple agents each trying to maximize their own payoff can be represented as a natural extension to Markov Decision Process (MDPs) in multiple agent systems \cite{bowling2000analysis}. %\red{cite?} 
Multiagent environments are inherently non-stationary since all the agents involved in the game are free to change their behaviour. Stochastic games extend the single agent Markov decision process to include multiple agents whose actions all
impact the resulting rewards and next state.

In our case which is a multi-agent system of different nodes in the graph, we can consider the current state of the graph as $S_t$. 

So we can consider a Markov Decision Process (MDP) %\red{cite?} 
represented as a tuple ($S,A,P,R$) where $S$ is a set of states, $A$ is a state of actions, $P$ is a transition function s.t. $P: S \times  A \times S \mapsto [0,1]$ and $R$ is a reward function s.t. $R: S \times A \mapsto R$. Here, the transition function defines a probability distribution over next states as a function of the current state and the agent's action.
Here an agent/node's action will be defined by the strategies defined in \ref{decision} in our setting.

\begin{figure}[h!]
    \centering
    \includegraphics[width=0.66\linewidth]{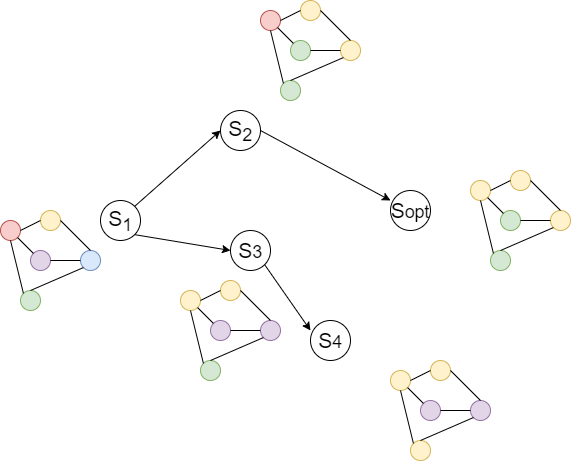}
    \caption{State Transitions to reach an optimal community partition}
    \label{fig:stateTransitionExample}
\end{figure}

The reward of each agent is to improve their payoff utility function. As a result, when a new incoming edge arrives, a state transition takes place from $s_t$ to $\Tilde{s_t}$ until at each stage we arrive at $s_t^{NE}$ where $s_t^{NE}$ denotes the optimal state arrived as per Nash Equilibrium.

In Fig \ref{fig:stateTransitionExample} we showcase how the state transition takes place. $S_1$ indicates the initial state where each node in the graph is an individual community. As the individual agents plays the game to maximize their individual rewards, it transitions to a new state. For instance, here the possible future states for the yellow node in $S_1$ is either join with the red node to transition to $S_3$ or join with the violet node to transition towards $S_2$.

\section{Proposed Methodology}\label{Proposed Methodology}

\subsection{GCD: Game theoretic based Modularity Calculation}

In this section, we first propose anomaly detection based on finding dense (of high modularity) subgraphs using the original version of game theoeretic framework \citep{newman2006modularity}, where each node in the graph tries to maximize their payoff utility in making a decision.

In our dynamic setting, for each timestamp $t_i$, we have a snapshot of the entire graph $G_{t_i}$.

%\begin{figure}[h!]
 %   \centering
  %  \includegraphics[width=0.6\linewidth]{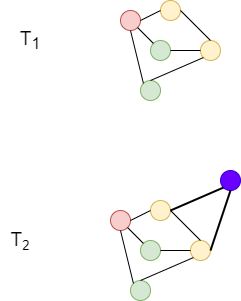}
   %edges appear in $T_2$}
    %\label{fig:graphsnapshots}
%\end{figure}

\textbf{Sketch Data structure to estimate dense subgraphs or community formation. } 

Obviously, for such type of analysis we need to store information regarding the actual nodes in the previous graph snapshots. This is a challenge since in a dynamic setting there can be countably infinite possible nodes in the graph that can keep on increasing as time goes. Also, it is not possible to store all edge frequencies and degrees of the nodes in an online scenario.

We thus use sketch data-structures with constant fixed space to approximate corresponding edge frequency values as well as to consider only very relevant/important nodes in each snapshot. 

\begin{itemize}
    \item For the first part where we consider only few important relevant nodes for particular timestamp $t_i$, we use Apache's Frequent Item Sketch. \cite{ting2020data}
    \item For the second part, for storing approximate edge frequencies, we use the one most popular frequency based sketches called count-min sketch \citep{cormode2005improved}.  
\end{itemize}

Count-min sketches (CMS) are popular streaming data structures used by several online
algorithms. CMS uses multiple hash functions to map events to frequencies, but unlike a
hash table uses only sub-linear space, at the expense of overcounting some events due to collisions.
Frequency is approximated as the minimum over all hash functions. CMS, is usually
represented as a two-dimensional matrix where each row corresponds to a hash function and hashes
to the same number of buckets (columns).

As per Thm. $1$ in \cite{cormode2005improved}, the estimated frequency of an item $x$ denoted by $\hat{a}$ from a CMS has the following guarantees: $a \leq \hat{a}$ with some probability $1 - \frac{\delta}{2}$, $a$ and $\hat a$ being the actual and estimated frequency of $x$.

\begin{equation}
\begin{aligned}
\hat{a} \leq a + \epsilon V
\label{eq:CMSBounds}
\end{aligned}
\end{equation}

where the values of these error bounds can be chosen by $w = \lceil{\frac{e}{\epsilon}}\rceil$ and $d = \lceil\ln(\frac{1}{\delta})\rceil$, $w$ and $d$ being the dimensions of the CMS. $V$ indicates the total sum of all the counts stored in the data structure.

Our main goal is to find dense sub-communities that forms in the current time tick. And if for a particular incoming edge at $t$, we check if we can achieve a  partition that follows Nash Equilibrium by re-considering strategies for each of the nodes in the network based on the intervention of a new incoming edge.

For each timestamp $t$ we maintain 2 CMS:

\begin{itemize}
    \item \textbf{Edge Frequency CMS} - Here we store the edge related information for each new incoming edge and retrieval and update can be made using the corresponding edge id.
    \item \textbf{Degree CMS} - For each new incoming edge $u-v$ we hash $u$ and $v$ with the current frequency of occurrence into the the CMS. All retreival and updates can be done in this CMS via the corresponding node id and this will be storing information related to the degree of any node currently existing in the dynamic graph.
\end{itemize}

For the corresponding time tick $t$, since we know the corresponding $\langle$src, dest$\rangle$ pairs, for each node we know the corresponding information like estimated degree, estimated number of edges between any two corresponding nodes that exist in the current node, thereby detecting dense subgraphs or subcommunities.

Hence for the time tick $t$, the agent/node's payoff utility function for joining community $C$,  $\mu(i,C)$ and transitioning to state $S'$ would be as follows:

\begin{equation}
    \hat{q} = \sum_{j \in C}(\hat{A}_{ij} -\frac{\hat{k_{i}}.\hat{k_{j}}}{2\hat{m}})\delta_{C,C_{i}}
    \label{eq:approxmiate_modularity}
\end{equation}

And the overall approximate modularity is given by

\begin{equation}
    \hat{Q} = \frac{1}{2\hat{m}}\sum_{ij}(\hat{A}_{ij} -\frac{\hat{k_{i}}.\hat{k_{j}}}{2\hat{m}})\delta_{C_{i},C_{j}}
    \label{eq:total_approxmiate_modularity}
\end{equation}

Here $\hat{A}_{ij}$ denotes the estimated frequency between node $i$ and node $j$ which can be retrieved from the CMS by querying as follows, $\retrieve(i-j)$ which returns the estimated edge frequency.

$\hat{k_i}$ denotes the estimated degree of node $i$ which can be calculated via the degree CMS.

In supplementary section we showcase how the error rate in calculating the approximate modularity w.r.t actual modularity obtained is bounded by some threshold.

%guraly Scoring}

%For each incoming new edge $u-v$ consider the modularity of the community that $u$ belongs to (say, $M_u$) and correspondingly $M_v$. Then we take an average of the two such that anomaly score of edge $u-v$ is given as follows.

%\[Anom(u-v) = \frac{\lambda  M_u  + (1-\lambda) M_v}{2}\]

\subsection{TGDC: Temporal based Game-Theoretic Community Detection}

While an agent's decision from Sec. \ref{decision} is based upon immediate neighborhood information (neighbors in the closest vicinity of 1-hop distance) in the network, despite being profitable at current iteration, it might not be profitable later iterations. For instance, a node might have to choose a community it initially rejected.
We therefore consider the improvement of an agent's decision not only based on its immediate neighborhood information but also through its temporal K-hop information (in our case temporal means future possibilities of a node to join). We thus discuss the concept of community retention, i.e. if a node has a higher chance of being retained in the community it initially chose. 

\vspace{8pt}

\noindent \textsc{\large \bf Defining Temporal Relation}

\vspace{5pt}

As per Fig: \ref{fig:prefaceComm1}, initially a node takes its action based on maximizing its utility (payoff) function and decides to join a community, say $C_{1}$. However, after some iterations, it finds that the other communities its connected to via its branches are much more profitable, lets say $C_{2}$. So there is a corresponding switch action. This is a classic case where the node takes a bad decision initially based on local motivations for joining C1 instead of C2, but later switches community. 

\begin{figure}[h!]
\centering
\begin{subfigure}{.6\linewidth}
  \centering
  \includegraphics[width=0.8\linewidth]{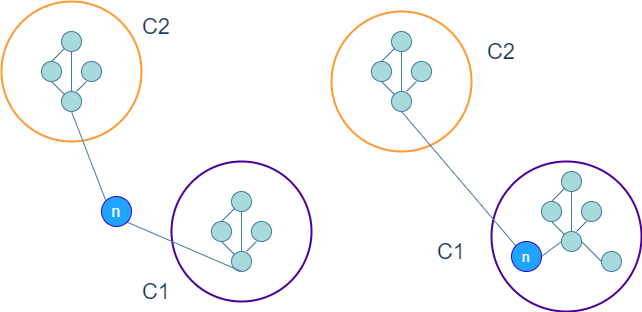}
  \caption{Node n sees joining $C_1$ is favourable in the first iteration}
  \label{fig:prefaceComm1}
\end{subfigure}%
\begin{subfigure}{.5\linewidth}
  \centering
  \includegraphics[width=.54\linewidth]{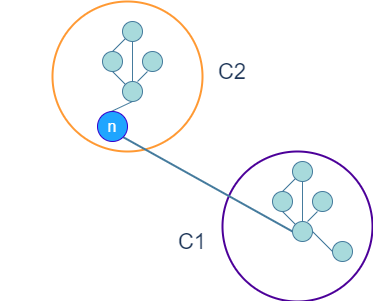}
  \caption{Node regrets in later iterations, now chooses $C_2$ which seem more profitable}
  \label{fig:sub2}
\end{subfigure}

\caption{A case of switching communities}
\label{fig:prefaceComm}
\end{figure}

\noindent \textsc{\large \bf Push and Pull Strategies}

While the general utility function indicates how profitable an agent's action will be while joining a particular community at the current instant, it does not consider whether a node will regret the current choice based on local environment and have to switch later on. We therefore propose the idea of push and pull strategies.

\noindent \textsc{\large \bf Pull strategy involves the following.
}
Suppose in \ref{fig:sub_future_dec} a node $N_i$ decides to join a community $C$, where the corresponding community has nodes which has external neighbors as 
$O_{1}$, $O_{2}$. If for a particular node belonging to the external neighbors if there are a lot of connections to them, then there is a high probability of that node getting pulled into the community in some time later, thus increasing more number of internal connections in future iterations, resulting in higher modularity of the community as a whole. So if we consider an average pull which actually indicates the expected number of nodes that will be pulled into the community, then the average number of internal connections in the future iteration will increase thereby forcing $N_i$ in Fig. \ref{fig:sub_future_dec} to be retained inside the community instead of switching.

\begin{figure}[h!]
\centering
\begin{subfigure}{.45\linewidth}
  \centering
  \includegraphics[width=.9\linewidth]{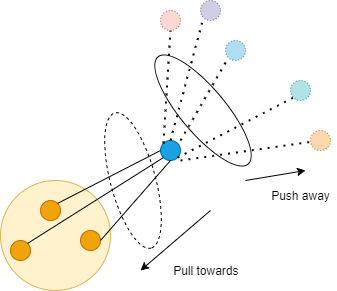}
  \caption{Push-Pull Strategy}
  \label{fig:sub1}
\end{subfigure}%
\begin{subfigure}{.55\linewidth}
  \centering
  \includegraphics[width=.87\linewidth]{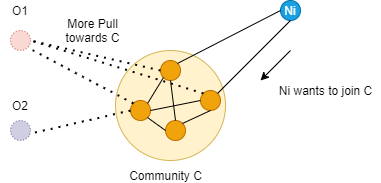}
  \caption{Ni's decision based on if $O_{1}$ would join}
  \label{fig:sub_future_dec}
\end{subfigure}
\caption{Community Retention Concept}
\end{figure}

\noindent \textsc{\large \bf Push Strategy}

However taking a decision based on this can also be detrimental if there's a case where the external degree of such a node $O_i$ is too high, indicating more external connections if considered in the community and hence more chance of escaping the community bond later. Hence nodes with high degree being entered through the pull are also prone to repulsion with the1 community later on.

\noindent \textsc{\large \bf Modified Utility Function}

The corresponding utility of the agent $i$ to join a community $C_t$ is given by follows.
\begin{equation}
\text{Payoff-Utility} = \mu(i, C_t) =  \sum_{j \in C_{t}}(A_{ij} -\frac{k_{i}.k_{j}}{2m})\delta_{C_{t},C_{i}} 
\label{eq:payoff}
\end{equation}
\begin{equation}
\text{Community-Retention} = \sum_{O_{j} \in O_{ext}}(A_{C_{t}, O_{j}} - \frac{d_{ext}^{O_{j}}}{2m})
\label{eq:commretention}
\end{equation}
\begin{equation}\label{eq4}
\lambda(\text{Payoff-Utility})+  (1-\lambda)(\text{Community-Retention})
\end{equation}

%\red{undefined notations here}

where $\lambda$ denotes the community retention rate and $\sum_{j}A_{C_{t}, O_{j}}$ includes all edge counts from community $C_t$ to external node $O_j$.

\vspace{4pt}

\textbf{Sampling Strategy.} At the initial stage, we consider each node a individual community and hence the sampling probability for each node(i.e. choosing any agent to observe the environment and take an action) follows a uniform distribution.
Since during the game, nodes who have chosen a community to join have already done so wisely taking into consideration whether they would be retained for longer term in the community, we aim at reducing the sampling space for the next iteration, by assigning a lower sampling probability to those nodes. 
Initially all the nodes have a uniform sampling probability $P$. After a node takes action based on Equation ~\ref{eq4} we assign the node a sampling probability as follows:

\begin{equation}\label{sampling_probab}
Prob(N_i) = \frac{\lambda P}{P_{N_1} + P_{N_2} + \dots + \lambda P  + P_{N_n}}
\end{equation}

The intuition is to give less chance to that node in the next iteration to be sampled, in the same proportionate amount as community retention rate. This thus guarantees that post community retention action, if the same node gets chosen again and it switches communities, its sampling probability gets factored down by $\lambda$ successively.

\vspace{5pt}

\subsection{Demonstration Example}

\begin{figure}[h!]
\centering
\begin{subfigure}{.5\linewidth}
  \centering
  \includegraphics[width=.3\linewidth]{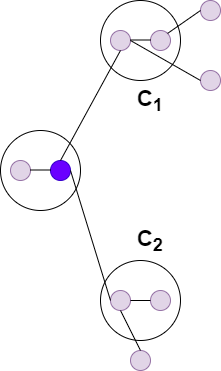}
    \caption{$C_1$: has 2 external nodes, $C_2$: has 1 external node }

  \label{fig:5sub1}
\end{subfigure}%
\begin{subfigure}{.4\linewidth}
  \centering
  \includegraphics[width=.45\linewidth]{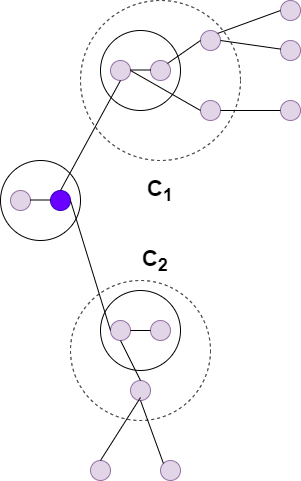}
  \caption{ Blue node decides whether to join $C_1$ or $C_2$ based on community retention }
  \label{fig:5sub2}
\end{subfigure}

\caption{Community Retention \& Sampling Strategy}
\label{fig:exampleCommunity}
\end{figure}

In the above example, we show only how the community Retention can be calculated.
The blue node chosen has an initial probability = $\frac{1}{14}$, total nodes $n$ being $14$ as per Fig \ref{fig:5sub2}.
From Fig: \ref{fig:5sub2} Total edges $m$  =  $13$, Degree $d^{ext}_{C_1}$ = $3$, Degree $d^{ext}_{C_2}$ = $2$.
Community Retention for $C_1$ = $(6  -  \frac{(3d^{ext}_{C_1} + 3d^{ext}_{C_1})}{2m}) = 5.42$ since the external nodes (not considering blue node) for $C_1$ has degree 3 and 2 respectively.
Community Retention for $C_2$ = $(4  -  \frac{(3d_{C_2}}{2m}) = 3.76$.
Considering $\lambda$ is $0.8$, then the probability of the blue node for the next iteration is 

 \[\frac{0.8 P}{P_{N_1} + P_{N_2} + \dots + 0.8  P  + P_{N_n}} = 0.0579 < P = \frac{1}{14}\]

where $P = P_{N_1} = P_{N-2} =...=P_{N_n} =$ $\frac{1}{14}$.

\vspace{6pt}

\subsection{Theoretical Proofs}\label{proofs}

\textbf{Existence of Nash Equilibrium}

%\red{existence implies what?}

\begin{proposition}
The proposed utility function in Equation. \ref{eq4} follows additive-separable and symmetric properties.
\end{proposition}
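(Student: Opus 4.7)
The plan is to verify symmetry and additive-separability separately for each of the two summands of the modified utility in Equation~\ref{eq4}, namely the modularity-based Payoff-Utility $\mu(i,C_t)$ of Equation~\ref{eq:payoff} and the Community-Retention term $R(C_t)$ of Equation~\ref{eq:commretention}, and then conclude by linearity.

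First I would note that the marginal operator $q(u,v,C) = \mu(u,C\cup\{v\}) - \mu(u,C-\{v\})$ is linear in $\mu$, so both properties are preserved under nonnegative linear combinations of utilities. It therefore suffices to establish symmetry and additive-separability for $\mu$ and for $R$ separately; the claim for $\lambda\mu + (1-\lambda)R$ then follows immediately for any $\lambda\in[0,1]$.

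For $\mu$ itself I would appeal directly to \cite{mcsweeney2012game}, where the two properties are established for the standard modularity payoff: symmetry is immediate from $A_{uv}=A_{vu}$ and $k_uk_v=k_vk_u$, while additive-separability follows because
\[\mu(u,S)=\sum_{v\in S}\Bigl(A_{uv}-\tfrac{k_uk_v}{2m}\Bigr)\delta_{C_u,C_v}\]
is already written in the required form with marginal kernel $q(u,v,S)=(A_{uv}-\frac{k_uk_v}{2m})\delta_{C_u,C_v}$. For $R$ I would recast the sum over external neighbors as a double sum over community members and their external contacts. Using $A_{C_t,O_j}=\sum_{i\in C_t}A_{i,O_j}$ and spreading the external-degree mass evenly across the members of $C_t$ gives
\[R(C_t)=\sum_{i\in C_t}\sum_{O_j\in O_{ext}}\Bigl(A_{i,O_j}-\tfrac{d_{ext}^{O_j}}{2m\,|C_t|}\Bigr),\]
which is a sum over community members (additive-separability) whose pair-wise kernel is symmetric in its two endpoints because $A_{i,O_j}$ is.

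The main obstacle I expect will be the third step. The retention term is natively indexed by external nodes rather than community members, so any rearrangement must simultaneously preserve the value of $R(C_t)$ and expose a genuinely symmetric pair-wise kernel when one endpoint sits inside $C_t$ and the other outside. In particular the $\tfrac{1}{|C_t|}$ normalization above is a modeling choice that has to be justified, and care is needed so that the induced marginal $q_R(u,v,C_t)$ satisfies $q_R(u,v,C_t)=q_R(v,u,C_t)$ pointwise rather than merely after summation; the rest of the argument is book-keeping.
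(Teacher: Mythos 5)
Your handling of the modularity term coincides with the paper's: symmetry from $A_{uv}=A_{vu}$ (plus the symmetric Kronecker delta) and additive-separability from the fact that $\mu(u,S)$ is already a sum over $v\in S$ of a pairwise kernel, with the appeal to \citet{mcsweeney2012game}. Your preliminary observation that both properties are preserved under convex combinations of utilities is a clean reduction the paper leaves implicit but clearly relies on. Where you genuinely diverge is on the Community-Retention term: the paper disposes of it in one clause, asserting it holds ``due to the same structure'' as the modularity term, whereas you correctly notice that the structure is \emph{not} the same --- Equation~\ref{eq:commretention} is a sum over \emph{external} nodes $O_j\notin C_t$ and does not depend on the joining node $i$ at all, so it is not natively in the form $\sum_{v\in S}q(u,v,S)$ required by the additive-separability definition. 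Your attempted repair (expanding $A_{C_t,O_j}=\sum_{i\in C_t}A_{i,O_j}$ and spreading the degree mass by $1/|C_t|$) is a reasonable direction, and your flag that the induced kernel's pointwise symmetry and the legitimacy of the $1/|C_t|$ normalization remain unjustified identifies a real gap --- one that is present, unacknowledged, in the paper's own proof. Note also that the difficulty is slightly worse than you state: additive-separability in the paper's sense requires the kernel to be the \emph{marginal} payoff $q(u,v,C)=\mu(u,C\cup\{v\})-\mu(u,C\setminus\{v\})$, and adding or removing a node from $C_t$ changes the external boundary set $O_{ext}$ itself, so the marginal of the retention term is a combinatorial quantity that your rearranged kernel does not obviously equal. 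In short: your proposal matches the paper where the paper is careful, and is more honest than the paper where the paper is not; neither argument actually completes the retention-term case.
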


\textbf{Proof} : We show it only for the first term in Equation~\ref{eq:payoff} which can be shown for second term Equation~\ref{communityRetention} also due to the same structure.

\begin{itemize}
    \item \textbf{Symmetric Property}
    Suppose $u$ joins community which has $v$. So, symmetric property holds for the first term \textit{a} from \ref{eq:payoff}, $A_{u,v}$ = $A_{v,u}$. Also $\delta_{C_{u}, C_{v}}$ = $\delta_{C_v, C_u}$ = 1

   \item \textbf{Additively-Separable Property}

By \ref{eq:payoff}, $\delta_{C_t,C_{i}} = 1$, since node $i$ joins the $C_{t}$ community.
So $D(C, C_t)$ = $\sum_{i \in C}\mu(i, C_t)$ indicates the utility for all nodes in $C_i$ w.r.t nodes in $C_t$.
Computing this for the entire partition set $\phi$, yields, $\sum_{C \in \phi}D(C, C_t)$, which by average results in $\frac{1}{2m}\sum_{C \in \phi}\sum_{i \in C}\mu(i,S)$, i.e Q.

\end{itemize}

\noindent \textsc{\large \bf Measurement of the Quality of Community}

\textit{A. Normalized Mutual Information}\\
The measure of the mutual dependence between the communities is detected using Normalized Mutual Information (NMI) \cite{xie2013overlapping}. NMI measures the similarity between two partitions and denotes the quality of the partition. Let the two partitions or communities are $C$ and $C'$. Then NMI can be denoted as
\begin{equation}\label{eq2}
NMI(C,C')=\frac {2I(C,C')}{H(C)+H(C')} 
\end{equation}
Where $H(.)$ is the entropy function and the mutual information $I(C, C')=H(C)+H(C')-H(C,C').$ No similarity and maximum similarity of two communities manifest NMI value 0 and 1 respectively. \par

%\subsubsection{Proposed Algorithm}
Consequently, we hence propose the resulting strategy as Alg. \ref{alg:proposedalg}.

Here we consider the initial probability distribution (sampling probability) to be a uniform distribution.

Initially each node in the graph is considered as 
a community.
So $\pi = [C_1, C_2, \dots C_n]$ considering $n$ nodes in the community and $C_i$ is the community label.

\begin{algorithm}[h]
\caption{Temporal Game Theoretic Community Detection}
\begin{algorithmic}[1]
\label{alg:proposedalg}
 \STATE \textbf{Input:} {Stream of graph edges for current timestamp $t_i$} 
 \STATE Insert each corresponding node for $t_i$ in the Frequent Item Sketch $F$
 \STATE Insert incoming edge frequency in \textit{Edge-CMS} and increment the slots in \textit{Degree-CMS} for the corresponding nodes connected via edge. 
\STATE \textbf{Init:} $\pi =$ Communities formed from the previous timestamp $t_{i-1}$ 
\STATE \textbf{Init:} $\pi_{prev} = \pi$ 
\STATE \textbf{Init : } $n$ = Retrieve total nodes from $F$ with some threshold $\delta$
\STATE \textbf{Init:} Probability of each node $N_i$ (denoted by $P_{N_i}$)assigned to $1/n$
\WHILE {true}
 
       \STATE  {node $N_{t} =$  sampleNode(NodeList, $p=[p_1,..,p_{N_t}, ..., p_n]$)}   
    \STATE $ \pi = \text{CommunitySwitch}(N_{t},G, \pi)$
% \red{where is the algo?} \pctext{Algorithm 2, Had Changed the title, restored it}
    
    \IF{$N_{t}$ switches}
        \STATE $p_{N_{t}}$ $= \lambda p_{N_{t}}$
        
    \ENDIF
    
    \IF {Terminate($\pi_{prev}$, $\pi$)}
        \STATE {BREAK}
    \ENDIF
    \STATE {$\pi_{prev} = \pi$}
\ENDWHILE
\end{algorithmic}
\end{algorithm}

Using Equations \ref{eq:commretention} and \ref{eq:payoff} for Community Retention and Payoff Utility respectively and retreiving the approximate edge counts and degree counts for the current timestamp $t_i$, 

\begin{algorithm}[h!]
\caption{CommunitySwitch - Utility for a node switching communities}
\begin{algorithmic}[1]
\FOR{$each$ $neighbor$ $of$ $node$ $n$}

\STATE totalUtility = $\lambda(\text{utilityJoin})$ +
$(\text{1-}\lambda)(\text{communityRetention})$  {from Equation} \ref{eq4}
\ENDFOR

\end{algorithmic}
\end{algorithm}

\begin{algorithm}[h!]
\caption{Terminate}
\begin{algorithmic}[1]

%\STATE %\red{Explain it takes two inputs - $\pi_{prev}$ and $\pi$. Explain NMI function}

\IF{$NMI(\pi_{prev},\pi) >= \eta $}
    \STATE {\text{Terminate algorithm}}
\ENDIF
\end{algorithmic}
\end{algorithm}

\section{Experiments} \label{experiments}

In this section, we evaluate our proposed game theoretic model $\mrmd$ on first static graphs and then extend it to dynamic graphs' scenario. In further discussions, we mention how $\mrmd$ results in faster convergence w.r.t the vanilla game-theoretic model just due to the incorporation of community retention mechanism. Also we further discuss how even due to the use of probabilistic sketches, we still get quality community partitions. In the end, we showcase in large timeseries intrusion network datasets, how the dense subcommunities are captured and how they are validated against ground-truth data.

\noindent \textsc{\large \bf Comparison with other community detection algorithms.}
The Baseline Models we used are:

\begin{itemize}
    \item \textit{MMSB} \citep{airoldi2008mixed} uses dense subgraph extraction to detect overlapping communities in network graphs.
    \item \textit{CPM} \citep{palla2005uncovering} uses k-clique information to generate corresponding communities.
    \item \textit{Node2vec} \citep{grover2016node2vec} uses vertex embeddings learned via biased random walk.
    \item \textit{Fast-Unfold} \citep{blondel2008fast} is a community detection algorithm that tries to maximize the modularity using louvain heuristics.
    \item \textit{Greedy MM} Newman et al \cite{clauset2004finding} starts with each node in its own community and tries to join pairs of communities until no such community pair is left.
    \item \textit{GraphGAN} \citep{wang2018graphgan} uses adversarial training in a min-max game and combines generative and discriminative graph representation learning methods for finding communities.
    \item \textit{GCD(Vanilla Game-theoretic Community Detection)} We use \cite{mcsweeney2012game}'s basic community detection method using the node mechanism structure and aim to compare the convergence rate. This is also the case for $\lambda = 1$ in our model (i.e. without the community retention feature) and without any community merge.
\end{itemize}

\paragraph{Datasets Used. }

We use the following datasets for evaluation purpose. Here we compare first our proposed model $\mrmd$ with respect to the above baselines.

\begin{itemize}
\item \textit{Amazon} is collected by crawling Amazon website. The vertices represent products; the edges indicate the frequently co-purchase relationships; the ground-truth communities are defined by the product categories in Amazon. This graph has 3,225 vertices and 10,262 edges

\item \textit{ERDOS 992}  proposes a pajek network dataset of 6.1K nodes and 7.5K edges.

\item \textit{Facebook Food Network } Data collected about Facebook pages (November 2017). These datasets represent blue verified Facebook page networks of different categories. Nodes represent the pages and edges are mutual likes among them.

\item \textit{Retweet Network}  Nodes are twitter users and edges are retweets. These were collected from various social and political hashtags.

\item \textit{Facebook Politicians Network }  Data collected about Facebook pages (November 2017). These datasets represent blue verified Facebook page networks of different categories. Nodes represent the pages and edges are mutual likes among them.

\end{itemize}

In Table \ref{tab:table2} we report the modularity values of the overall static graphs obtained for each of the baseline models. For $\mrmd$ we consider the community retention rate to be $\lambda = 0.8$. All the reported modularity scores are considered over an average of 5 runs.

\paragraph{Experimental Setup. }
All experiments are carried out on a $1.00 GHz$ Intel Core $i5$ processor, $8 GB$ RAM, running OS $Win10$ $10.0.19042$. Under Appendix we discuss in more detail the corresponding sketch datastructure dimensions corresponding to each of the dataset.

As follows from Table \ref{tab:table2}, $\mrmd$ reflects a good quality of communities formed as compared to the baselines. In datasets like Amazon, Enron and Fb-pages-politician, $\mrmd$ shows a better modularity value.

\begin{table*}[ht!]
\centering
\begin{adjustbox}{max width=0.9\linewidth}

\begin{tabular}{| c | c | c | c | c | c| c| c| c| } 
\hline
 Network Name & TGDC(Our Model) & MMSB & CPM & Node2vec & Fast-unfold & Greedy MM & GraphGAN\\
 \hline
 Amazon & \textbf{0.77 $\pm$ 0.01} & 0.63 $\pm$ 0.013 & 0.56 $\pm$ 0.007 & 0.59  $\pm$ 0.03 & 0.67 $\pm$ 0.02 & 0.62 $\pm$ 0.04 & 0.58 $\pm$ 0.01  \\
  \hline
 Enron & \textbf{0.54 $\pm$ 0.01} & 0.418 $\pm$ 0.008 & 0.380 $\pm$ 0.011 & 0.38 $\pm$ 0.026 & 0.42 $\pm$ 0.015 & 0.47 $\pm$ 0.02 & 0.51 $\pm$ 0.02\\ 
 \hline
 ERDOS992 & 0.618 $\pm$ 0.002 & 0.604 $\pm$ 0.02 & 0.549 $\pm$ 0.064 & 0.573 $\pm$ 0.037 & 0.58 $\pm$ 0.02 & 0.571 $\pm$ 0.07 & \textbf{0.63 $\pm$ 0.01}  \\  
 \hline
 FB-Pages-Food & 0.61 $\pm$ 0.007 & 0.563 $\pm$ 0.014 & 0.503 $\pm$ 0.03 &  0.518 $\pm$ 0.011 & 0.55 $\pm$ 0.01 & 0.53 $\pm$ 0.03 & \textbf{0.61  $\pm$ 0.04}\\
 \hline
 Retweet & 0.51 $\pm$ 0.01 & 0.55 $\pm$ 0.023 & 0.53 $\pm$ 0.036 & \textbf{0.57  $\pm$ 0.046} & 0.48 $\pm$ 0.02 & 0.52 $\pm$ 0.03 & 0.57 $\pm$ 0.03\\
 \hline
 FB-Pages-Politician & \textbf{0.71 $\pm$ 0.02} & 0.55 $\pm$ 0.023 & 0.53 $\pm$ 0.036 & 0.57  $\pm$ 0.046 & 0.64 $\pm$ 0.01 & 0.61 $\pm$ 0.003 & 0.704 $\pm$ 0.04\\

 \hline
 
\end{tabular}
\end{adjustbox}

\caption{Modularity Comparison for ($\lambda$ = 0.8)}
\label{tab:table2}

\end{table*}

\paragraph{Fast Convergence of TGDC. }
We showcase through Fig \ref{fig:fasterConvergencePlot}, how $\mrmd$ results in faster convergence for finding the overall modularity, by minimizing false regrettable community switches based on reward from local environment.

\begin{figure}[ht!]
\centering
\begin{subfigure}{.8\linewidth}
  \centering
  \includegraphics[width=0.6\linewidth]{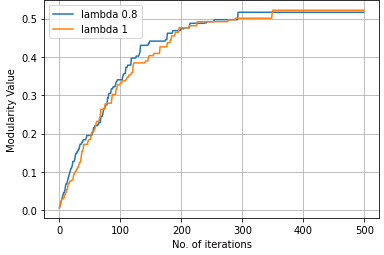}
  \caption{\textbf{FB pages Food Dataset}}
  \label{fig:fastsub2}
\end{subfigure}

\begin{subfigure}{.8\linewidth}
  \centering
  \includegraphics[width=0.6\linewidth]{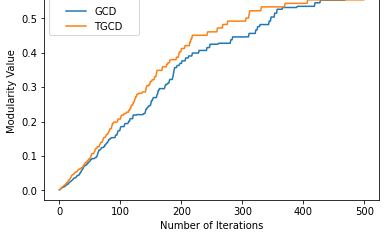}
  \caption{\textbf{Enron Dataset} - From iteration no. 100 to 400, TGDC($\lambda$ = 0.8) has higher modularity values compared to GCD($\lambda$ = 1)}
  \label{fig:fastsub3}
\end{subfigure}

\begin{subfigure}{.8\linewidth}
  \centering
  \includegraphics[width=0.6\linewidth]{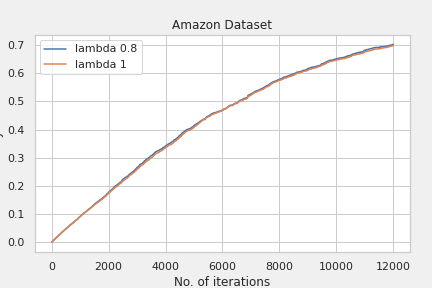}
  \caption{\textbf{Amazon Dataset} - From iteration no. 2000 to 4000, TGDC($\lambda$ = 0.8) has higher modularity values compared to GCD($\lambda$ = 1)}. Similarly for later phases.
  \label{fig:fastsub3}
\end{subfigure}
\caption{Faster Convergence for TGDC model}
\label{fig:fasterConvergencePlot}
\end{figure}

\subsection{Finding Dense Communities - Indication of Anomalies} \label{anomaly indication}

%\paragraph{Datasets}

We evaluate the identification of dense communities (i.e. individual communities with a very high modularity value greater than some threshold) in large scale dynamic networks. 
We showase the same for the following intrusion detection based timeseries datasets : \emph{DARPA} \cite{lippmann1999results} and \emph{ISCX-IDS2012} are popular datasets for graph anomaly detection.

\paragraph{Experimental Setup.}
For our experiments, for each timestamp $t_i$ we consider a static snapshot of the graph, based upon the incoming edges. We keep information about the nodes for each timestamp in a Frequent Item Sketch of \textit{maxMapSize} = 4096 (with corresponding threshold set to 0.3), Edge Count Min Sketch of dimensions $w=719,d=2$ and Degree Count Min Sketch of dimensions $w=719,d=2$.

\begin{figure}[h!]
\centering
\begin{subfigure}{0.95\linewidth}
 \centering
\includegraphics[width=0.95\linewidth]{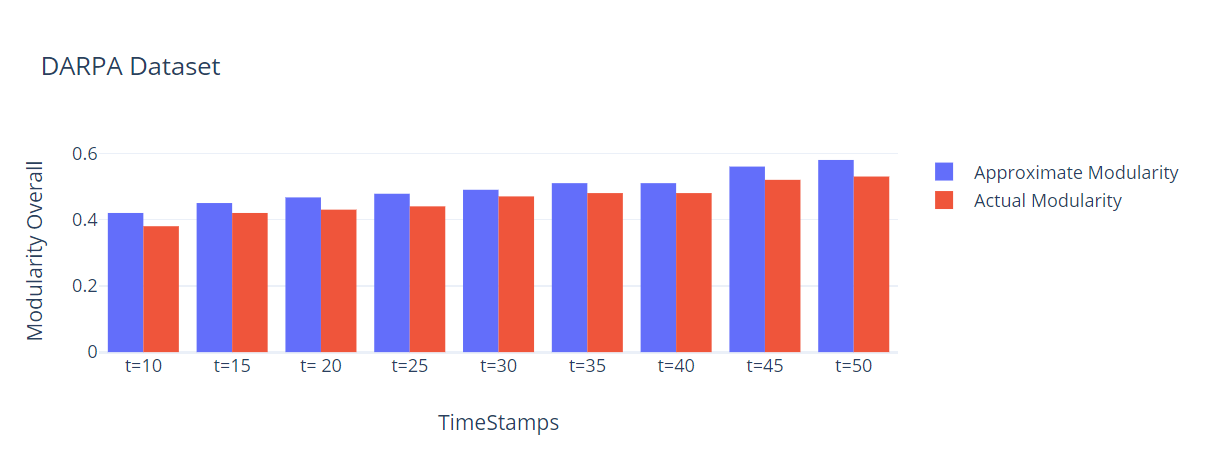}
\label{fig:DARPA_TIMESERIES}      
\end{subfigure}
\begin{subfigure}{0.95\linewidth}
 \centering
\includegraphics[width=0.95\linewidth]{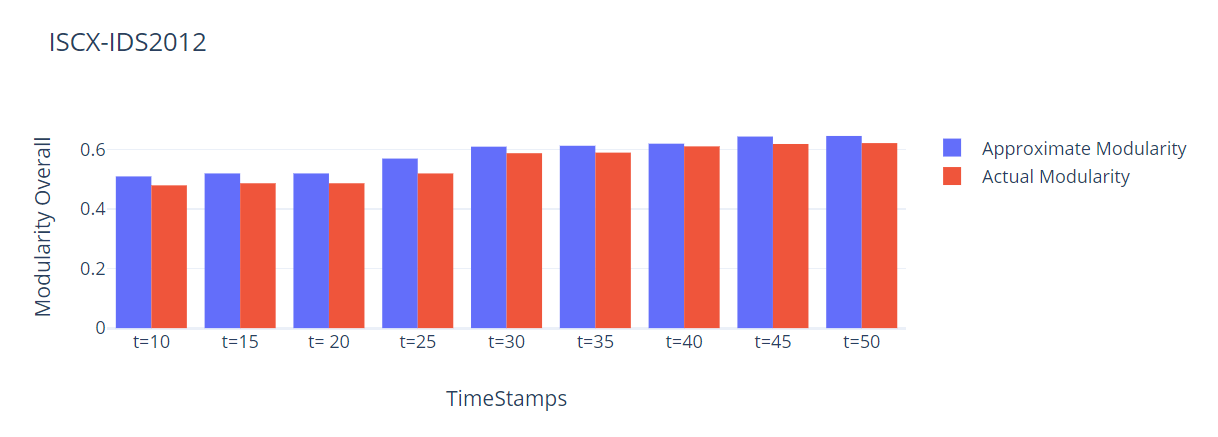}
\label{fig:IC_TIMESERIES}      
\end{subfigure}
\caption{Approximate Modularity vs Actual Modularity Values}
\label{fig:ApproximateModularity}
\end{figure}

As we see from Fig: \ref{fig:ApproximateModularity} over the first 50 timestamps, the approximate modularity achieved using the sketch datastructure is closer to the actual modularity obtained when no using constant space datastructures. For \emph{Darpa} the average error rate = \textit{0.08\%} while for \emph{ISCX-IDS2012} dataset the average error rate = \textit{0.17\%}

\paragraph{Ground Truth Comparison.}
For each of the communities we retrieved per dataset for a particular timestamp, we computed the top-K (some threshold) communities scoring by high value of modularity obtained using $\mrmd$. And for each of these $K$ communities per timestamp we observe the percentage of edges involved in the high modular communities which are actually anomaly as per our ground truth dataset. Setting $K = 10$ for the entire DARPA dataset, we observe an average accuracy of $94\%$ i.e. edges that are actually anomalous and that exists in the top 10 highest modular communities on an average . Similarly for $ISCX-IDS2012$, we observed accuracy of $91\%$. This shows how using $\mrmd$ we were able to approximate dense subcommunities which involved a high number of anomalous edges.

\vspace{-10pt}
\section{Conclusion and Future Work}
\vspace{-10pt}
We show cased how through our proposed game-theoretic model and probabilistic sketch datastructures we were able to retrieve the dense subcommunities for each timestamp with sufficient guarantees. In addition we showcased how our model performs well with constant space requirements w.r.t to traditional community detection algorithms and as a result could be extended to a dynamic setting. As one of the challenges in such a multi-agent system is that time complexity is a bit high due to agents playing towards rewards maximization. Hence, in future we aim to look at improving its performance by proposing a distributed approach where some of the agents within the environment can take action parallely.

%\newpage

\bibliographystyle{plainnat}
\bibliography{sample}

\newpage
\onecolumn
\allowdisplaybreaks

\par\noindent\rule{\textwidth}{1pt}
\begin{center}
\large\textbf{Supplementary Material: 
\papertitle}
\end{center}
\par\noindent\rule{\textwidth}{0.4pt}

\begin{proposition}\label{suppl : proposition_proof}
The error estimate of the approximate modularity value $\hat{Q}$ obtained via Equation \ref{eq:total_approxmiate_modularity} w.r.t the actual modularity $Q$ is bounded by some threshold $\Tilde{\epsilon}V_u$ %\red{some threshold}. -- \red{not a precise statement. what is "some?", write the exact statements what are we going to prove}
\end{proposition}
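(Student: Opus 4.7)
The plan is to bound $|Q-\hat Q|$ by combining the CMS guarantee stated in Equation~\ref{eq:CMSBounds} for each estimated quantity ($\hat A_{ij}$, $\hat k_i$, $\hat m$) with the triangle inequality, and then to absorb all the error contributions into a single $\tilde\epsilon V_u$ term, where $V_u$ will denote an upper envelope on the total mass stored across the sketches (essentially the sum of all the relevant $V$'s appearing in Equation~\ref{eq:CMSBounds}).

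First, I would write $Q-\hat Q$ as a telescoping sum over three sources of error: (i) replacing $A_{ij}$ by $\hat A_{ij}$, (ii) replacing $k_i k_j$ by $\hat k_i \hat k_j$, and (iii) replacing the normalizer $1/(2m)$ by $1/(2\hat m)$. For (i), the CMS guarantee directly gives $|\hat A_{ij}-A_{ij}|\le \epsilon V_A$ with probability at least $1-\delta/2$ per query. For (ii), I would use the standard expansion
\[
\hat k_i\hat k_j - k_i k_j = (\hat k_i-k_i)k_j + k_i(\hat k_j-k_j) + (\hat k_i-k_i)(\hat k_j-k_j),
\]
and bound each term by $\epsilon V_k$ times a bounded factor ($k_i,k_j\le 2m$). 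For (iii), since $\hat m\ge m$ (CMS overestimates), I would use $|1/(2\hat m)-1/(2m)| = (\hat m-m)/(4m\hat m)\le \epsilon V_m/(4m^2)$, which is a lower-order term relative to the first two.

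Next, I would apply a union bound across all the CMS queries used inside the double sum of Equation~\ref{eq:total_approxmiate_modularity}. Because there are $O(n^2)$ pairs but each sketch query succeeds with probability $1-\delta/2$, I would either invoke the union bound directly (paying an $n^2$ factor inside $\delta$) or note that each row of the CMS gives a simultaneous guarantee under a single hash choice, so the per-pair failure probability rescales cleanly. I would then collect the coefficients in front of $\epsilon$ from steps (i), (ii), (iii), factor out $1/(2\hat m)$, and define $\tilde\epsilon$ so that the final bound reads $|Q-\hat Q|\le \tilde\epsilon V_u$ with $V_u := V_A+V_k+V_m$ (up to constants), holding with high probability.

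The main obstacle is the cross term $(\hat k_i-k_i)(\hat k_j-k_j)$ together with the coupling with $1/(2\hat m)$: it is quadratic in the CMS error and must be shown to remain $O(\epsilon V_u)$ rather than $O(\epsilon^2 V_u^2)$, which will require using the fact that one of the two factors is deterministically bounded by a degree (hence by $V_k$) so the product collapses to a first-order term. Handling the $1/\hat m$ factor also requires the mild assumption that $m$ is bounded away from zero for the timestamps considered, which is automatic once at least one edge has arrived. Once these two points are in place, the rest is arithmetic bookkeeping and a union bound, yielding the claimed $\tilde\epsilon V_u$ bound.
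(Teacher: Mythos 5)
Your overall strategy --- expand the modularity formula, substitute the CMS error guarantees for $\hat{A}_{ij}$, $\hat{k}_i$, $\hat{m}$, expand the product $\hat{k}_i\hat{k}_j$ into first-order and cross terms, and absorb everything into a single $\tilde{\epsilon}V_u$ term --- is the same skeleton the paper uses, and your decomposition $\hat{k}_i\hat{k}_j - k_ik_j = (\hat{k}_i-k_i)k_j + k_i(\hat{k}_j-k_j) + (\hat{k}_i-k_i)(\hat{k}_j-k_j)$ is exactly the expansion that appears in the paper's proof. The differences are in the framing. First, the paper proves only the one-sided inequality $\hat{Q} \leq Q + \tilde{\epsilon}V_u$, not a bound on $|Q-\hat{Q}|$: it rewrites each CMS inequality as an exact equality with a nonnegative slack variable ($\hat{A}_{ij} = A_{ij}+\epsilon_e V_e - \phi_{e_{ij}}$, and similarly for degrees), collects the leftover degree terms into a quantity $\phi(n,V_e)$ that is assumed nonnegative by a suitable choice of sketch parameters, and drops it; your two-sided version is stronger but is precisely what forces the extra work you flag on the quadratic cross term. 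Second, the paper sidesteps your error source (iii) entirely by defining $\hat{m} := V_e$, the total mass of the Edge-CMS, so no separate normalizer perturbation appears; it then asserts $\tilde{Q} \leq Q$ from $V_e \geq m$, a step that is actually delicate because the summands $A_{ij}-k_ik_j/(2m)$ can be negative, so your explicit treatment of the $1/(2\hat{m})$ versus $1/(2m)$ discrepancy is the more careful route. Third, your $V_u$ is a sum of sketch masses, whereas the paper's $V_u$ is the number of unique edges (the number of nonzero summands), which is how the paper absorbs the size of the double sum into $\tilde{\epsilon}V_u$; you will need the analogous absorption to keep the $O(n^2)$ pairs from inflating your constants, and your union-bound bookkeeping over per-query failure probabilities is a probabilistic refinement the paper omits altogether.
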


\begin{proof}

The approximate modularity obtained from the sketch data-structure can be shown as follows:

\begin{equation}
    \hat{Q} = \frac{1}{2\hat{m}}\sum_{i,j}(\hat{A_{ij}} - \frac{\hat{k_i}.\hat{k_j}}{2\hat{m}})\delta_{C_{i},C_{j}}
\label{Approximate: Modularity}
\end{equation}
    where $i,j$ indicates the corresponding nodes in the graph $G$.

Considering our \textit{Edge-CMS} which will store the edge wise frequencies and \textit{Degree-CMS} which will store node wise degree frequencies.

As per CMS guarantees, we know

\begin{equation}
    \hat{A_{ij}} \leq A_{ij} + \epsilon_{e}V_{e}
\label{Approximate: Edge}
\end{equation}

where $\epsilon_{e}$ is determined by the configuration parameters of the \textit{Edge-CMS} and $V_{e}$ indicates the total counts of all the frequencies in \textit{Edge-CMS}

From Equation. \ref{Approximate: Modularity}, $\hat{m}$ indicates the approximate total number of edges found in the current Graph $G$. Hence $\hat{m} = V_{e}$ in our case.

We rewrite Equation. \ref{Approximate: Edge} as follows
\begin{equation}
        \hat{A_{ij}} = A_{ij} + \epsilon_{e}V_{e} - \phi_{e_{ij}}
\label{Approximate : Edge equality}
\end{equation}
Given $\phi_{e_ij} \geq 0$

Furthering Equation. $\ref{Approximate: Edge}$ for our approximate degree values, we show:

\begin{equation}
    \hat{k_i} = k_i + \epsilon_{d}V_{d} - \phi_{d_i}
\label{Approximate : Node equality}
\end{equation}

Similarly, $\epsilon_{d}$ is determined by the configuration parameters of the \textit{Degree-CMS} and $V_{d}$ indicates the total counts of all the frequencies in \textit{Degree-CMS}

Combining Equation. \ref{Approximate: Modularity}, \ref{Approximate : Edge equality}, \ref{Approximate : Node equality} we get the following :

\[\hat{Q} = \frac{1}{2V_e}\sum_{i,j}({A_{ij} + \epsilon_{e}V_e - \phi_{e_{ij}}} - \frac{(k_i + \epsilon_dV_d - \phi_{d_i}).(k_j + \epsilon_dV_d - \phi_{d_j})}{2V_e})\delta_{C_{i},C_{j}}
\]

\[
= \frac{1}{2V_e}\sum_{i,j}({A_{ij}} + \epsilon_{e}V_e - \phi_{e_{ij}}  - \frac{k_i.k_j}{2V_e} - \frac{k_i.(\epsilon_dV_d - \phi_{d_j})}{2V_e} - \frac{k_j.(\epsilon_dV_d - \phi_{d_i})}{2V_e}
- \frac{(\epsilon_dV_d - \phi_{d_i}).(\epsilon_dV_d - \phi_{d_j})}{2V_e}
)
\]

\begin{equation}\label{eq:final approximate}   
\begin{split}
= \frac{1}{2V_e}\sum_{i,j}(A_{ij} -  \frac{k_i.k_j}{2V_e} ) + \frac{1}{2V_e}\sum_{ij}(\epsilon_eV_e - \phi_{e_{ij}})
- \frac{1}{2V_e}\sum_{i,j}\frac{k_i.(\epsilon_dV_d - \phi_{d_j})}{2V_e} -  \\ 
\frac{1}{2V_e}\sum_{i,j}\frac{k_j.(\epsilon_dV_d - \phi_{d_i})}{2V_e} -
\frac{1}{2V_e}\sum_{i.j}\frac{(\epsilon_dV_d - \phi_{d_i}).(\epsilon_dV_d - \phi_{d_j})}{2V_e}     
\end{split} 
\end{equation}

Since we know that in a graph, sum of degrees of all nodes is equal to twice the number of edges, 

\[ 
\sum_{i}k_i = \sum_{j}k_j = 2V_e
\]

the Equation. \ref{eq:final approximate} can be written as follows

\begin{equation}\label{eq : final expanded approx Q} 
\begin{split}
     = \frac{1}{2V_e}\sum_{i,j}(A_{ij} -  \frac{k_i.k_j}{2V_e} ) + \frac{1}{2V_e}\sum_{ij}(\epsilon_eV_e - \phi_{e_{ij}}) -  \frac{1}{2V_e}\sum_{j}(\epsilon_dV_d - \phi_{d_j}) - \frac{1}{2V_e}\sum_{i}(\epsilon_dV_d - \phi_{d_i}) - \\ \frac{1}{2V_e}\sum_{i.j}\frac{(\epsilon_dV_d - \phi_{d_i}).(\epsilon_dV_d - \phi_{d_j})}{2V_e}
\end{split}
\end{equation}

Considering the following notations for the first 2 variables of the above equation,

\begin{equation}
\Tilde{Q} = \frac{1}{2V_e}\sum_{i,j}(A_{ij} -  \frac{k_i.k_j}{2V_e} )
\label{replacement : Approx Q}
\end{equation}

\begin{equation}
    \frac{1}{2V_e}\sum_{ij}(\epsilon_eV_e - \phi_{e_{ij}})  = \sum_{ij}(\frac{\epsilon_e}{2} - \frac{\phi_{e_{ij}}}{2V_e})
\label{eq:approximate epsilon}    
\end{equation}

From Equation. \ref{eq:approximate epsilon}  $\frac{\phi_{e_{ij}}}{2V_e} \in \mathbb{R^{+}}$ denotes the corresponding \textit{EdgeCMS} threshold for the edge $e_{ij}$ as per Equation. \ref{Approximate : Edge equality} w.r.t the total number of edges in $G$. Si So by denoting $\frac{\phi_{e_{ij}}}{2V_e}$ as $\phi_{avg}$ Equation. \ref{eq:approximate epsilon} can be rewritten as follows ($V_u$ denotes the total number of unique edges in the graph):

\begin{equation}
    \sum_{ij}(\frac{\epsilon_e}{2} - \phi_{avg}) = V_u(\frac{\epsilon_e}{2} - \phi_{avg}) 
    = \Tilde{\epsilon}V_u
\label{eq:final Epsilon value}
\end{equation}

Thus Equation \ref{eq : final expanded approx Q} can be rewritten as follows:

\begin{equation}
\hat{Q} = 
\Tilde{Q} + \Tilde{\epsilon}V_u - (\frac{1}{2V_e}\sum_{j}(\epsilon_dV_d - \phi_{d_j}) + \frac{1}{2V_e}\sum_{i}(\epsilon_dV_d - \phi_{d_i}) +  \frac{1}{2V_e}\sum_{i.j}\frac{(\epsilon_dV_d - \phi_{d_i}).(\epsilon_dV_d - \phi_{d_j})}{2V_e} )
\label{eq : finalAPRFRGRG}
\end{equation}
The second part of the above equation can be further condensed as follows:

\[ (\frac{1}{2V_e}\sum_{j}(\epsilon_dV_d - \phi_{d_j}) + \frac{1}{2V_e}\sum_{i}(\epsilon_dV_d - \phi_{d_i}) +  \frac{1}{2V_e}\sum_{i.j}\frac{(\epsilon_dV_d - \phi_{d_i}).(\epsilon_dV_d - \phi_{d_j})}{2V_e} )  
\]

\[
= \sum_{j}(\frac{\epsilon_dV_d}{2V_e} - \frac{\phi_{d_j}}{2V_e}) + \sum_{i}(\frac{\epsilon_dV_d}{2V_e} - \frac{\phi_{d_i}}{2V_e}) + 
\sum_{i.j}(\frac{\epsilon_dV_d}{2V_e} - \frac{\phi_{d_i}}{2V_e}).(\frac{\epsilon_dV_d}{2V_e} - \frac{\phi_{d_j}}{2V_e})
\]

Since $v_d =  2V_e$ (sum of degrees of nodes in a graph is equal to twice the sum of the number of edges), the above equation can be further simplified as :

\[
= \sum_{j}(\epsilon_d - \frac{\phi_{d_j}}{2V_e}) + \sum_{i}(\epsilon_d - \frac{\phi_{d_i}}{2V_e}) + 
\sum_{i.j}(\epsilon_d - \frac{\phi_{d_i}}{2V_e}).(\epsilon_d - \frac{\phi_{d_j}}{2V_e})
\]

which can be represented by some constant $\phi(n,V_e)$ based upon the parameters $n$ and $V_e$, where $n$ indicates the number of nodes in the graph.

Hence combining Equations \ref{replacement : Approx Q} , \ref{eq:approximate epsilon}, \ref{eq:final Epsilon value}, \ref{eq : finalAPRFRGRG} , we can simplify the following equation as

\[\hat{Q} = 
\Tilde{Q} + \Tilde{\epsilon}V_u - \phi(n, V_e)    
\]

where appropriate parameters for the sketch datastructures are chosen such that $\Tilde{\epsilon}V_u - \phi(n, V_e)  \geq 0$

\[\hat{Q} + \phi(n, V_e)  = 
\Tilde{Q} + \Tilde{\epsilon}V_u  
\]
\[
\hat{Q} \leq \hat{Q} + \phi(n, V_e)  = 
\Tilde{Q} + \Tilde{\epsilon}V_u  
\]

Thus the final inequality we achieve is the following

\begin{equation}
    \hat{Q} \leq 
\Tilde{Q} + \Tilde{\epsilon}V_u  
\end{equation}
\label{finalap}

Since, $V_e = \hat{m} \geq m$ , where $m$ is the total number of edges as per Equation \ref{Approximate: Edge}, hence it can be proved that $\Tilde{Q} \leq Q$ where $Q$ is given by

\[
Q =  \frac{1}{2m}\sum_{i,j}(A_{ij} -  \frac{k_i.k_j}{2m} )
\]

Hence Equation \ref{finalap} can be showcased as

\begin{equation}
    \hat{Q} \leq Q + \Tilde{\epsilon}V_u  
\end{equation}
 
Thus proving our proposition that the approximate modularity value obtained is bounded by some error threshold.

\end{proof}

\appendix

\end{document}